\begin{document}
\title{On the Time Complexity of Finding a Well-Spread Perfect Matching in Bridgeless Cubic Graphs%
  \thanks{Supported by the project GAUK182623 of the Charles University Grant Agency and by 
  grant 25-16627S of the Czech Science Foundation.
  }}
\titlerunning{Well-Spread Perfect Matching}
%

\author{
Babak Ghanbari \and
Robert Šámal}
\authorrunning{B. Ghanbari and R. Šámal}
%
\institute{Computer Science Institute, Charles University, Prague, Czech Republic \\
\email{ \{babak,samal\}@iuuk.mff.cuni.cz }\\
\url{} }
\maketitle              
\begin{abstract}
We present an algorithm for finding a perfect matching in a $3$-edge-connected cubic graph that intersects every $3$-edge cut in exactly one edge.
Specifically, we propose an algorithm with a time complexity of $O(n \log^4 n)$, which significantly improves upon the previously known $O(n^3)$-time algorithms
for the same problem. The technique we use for the improvement is efficient use of the cactus model of 3-edge cuts. 
As an application, we use our algorithm to compute embeddings of $3$-edge-connected cubic graphs with limited number of singular edges (i.e.,
edges that are twice in the boundary of one face) in  $O(n \log^4 n)$ time; this application contributes to the study of the well-known Cycle Double Cover conjecture. 
\keywords{Algorithm \and Perfect matching \and Cut representation \and Embedding}
\end{abstract}
\section{Introduction}

It is well known that every bridgeless cubic graph admits a perfect matching~\cite{JuliusPetersen}. Using Edmonds' blossom algorithm, a
maximum matching can be found in polynomial time. Gabow~\cite{Gabow} demonstrated that the weighted matching problem on general graphs
can be solved in time $O(n(m + n \log n))$.

Consequently, the minimum weight perfect matching problem for bridgeless cubic graphs can be solved in $O(n^2 \log n)$ time. Diks and
Sta\'nczyk~\cite{Diks} proposed an improved algorithm for finding perfect matchings in bridgeless cubic graphs with time complexity of
$O(n \log^2 n)$.

In this paper we study the complexity of finding a perfect matching~$M$ in a bridgeless cubic graph such that in every 
3-edge cut $M$~contains exactly one edge. By parity, $M$ can contain one or three edges in a 3-edge cut; so our condition says 
no 3-edge cut is contained in $M$. We shortly express this by saying $M$ is \emph{well-spread}. 

Kaiser et al. \cite{Kaiser} showed that a well-spread perfect matching exists in every bridgeless cubic graph (and went on to 
use this to find how much of the graph can be covered by two, three, etc. perfect matchings). 
Their proof uses Edmonds' perfect matching polytope (vector $(1/3, 1/3, \dots)$ is a fractional perfect matching, thus 
it is a convex combination of perfect matchings -- each of them is well-spread). This, however, doesn't yield an efficient algorithm. 

In \cite{Boyd}, Boyd et al. focus on developing algorithms for finding $2$-factors in bridgeless cubic graphs that cover specific
edge-cuts, which brings these $2$-factors closer to Hamiltonian cycles. They provide an efficient algorithm that finds a minimum-weight $2$-factor
that covers all $3$-edge cuts in weighted bridgeless cubic graphs (in contrast with finding a Hamiltonian cycle, which is 
computationally hard). They also provide both a polyhedral description of such $2$-factors
and of their complements -- well-spread perfect matchings. This is the first known polynomial-time algorithm for this
problem, with a time complexity of $O(n^3)$, where $n$ is the number of vertices. In this work, we improve this result for
$3$-edge-connected cubic graphs using \hyperref[algorithm-3cut]{Algorithm~1} with time complexity $O(n \log^4 n)$. 

Boyd et al. find a ``peripheral 3-edge-cut'' (a cut such that one side of it is internally 4-edge-connected)
and then use recursion. We save computation by using the cactus model, also called a tree of cuts. 
In this tree it is easy to find the peripheral cut and also to update the tree when we contract a part of the cut. 
This leads to much improved time complexity, although only for 3-edge-connected graphs. 

We then proceed to apply this result to the study of the well-known Cycle Double Cover (or CDC) Conjecture
\cite{szekeres_1973,seymour1979sums}. In the language of graph embedding, the CDC conjecture is equivalent to every bridgeless cubic graph
having a surface embedding with no singular edges (i.e., with no edge that is on the boundary of one face twice). As an
application of our work, for a $3$-edge-connected cubic graph we can find embeddings with a bounded number of singular
edges in time $O(n \log^4 n)$. 

The structure of the paper is as follows. In the next section, we introduce the necessary definitions and concepts. In Section~3, we
discuss the cactus model, a key component in deriving our results. Section~4 presents our main result, 
an efficient algorithm to find the well-spread perfect matching. 
Finally, in Section~5, we apply our algorithm to get results about the well-known CDC conjecture.

\section{Preliminaries}
Let $G = (V, E)$ be a graph with vertex set $V$ and edge set $E$. A graph in which all vertices have degree $3$ is called \emph{cubic}.
A \emph{cycle} is a connected $2$-regular graph. A \emph{bridge} in a graph $G$ is an edge whose removal increases the number of
components of $G$. Equivalently, a bridge is an edge that is not contained in any cycles of $G$. A graph is \emph{bridgeless} if it
contains no bridge. An \emph{edge cut} in a graph is a set of edges whose removal increases the number of connected components of the
graph. An edge cut $C$ in $G$ is called a \emph{non-trivial cut} if every component of $G-C$ has at least two vertices. Otherwise it is called \emph{trivial}. A \emph{$k$-edge-cut} in a graph is an edge cut that contains exactly $k$ edges. A connected graph is \emph{$k$-edge-connected}
if it remains connected whenever fewer than $k$ edges are removed. 
A graph is \emph{cyclically $k$-edge-connected}, if at least $k$ edges must be removed to disconnect 
it into two components such that each component contains a cycle. We say that a subset $F\subset E$ \emph{covers} an edge-cut~$D$ 
if~$F \cap D \neq \emptyset$. For a subset $S \subseteq E$, $G/S$ is the graph obtained from $G$ by contracting all the edges in $S$.
Note that we keep multiple edges and only remove loops in such contraction. 

Petersen’s theorem~\cite{JuliusPetersen} states that every bridgeless cubic graph contains a perfect matching.
Sch\"{o}nberger~\cite{Schonberger1934} proved the following strengthened form of Petersen’s theorem.

\begin{theorem}[\cite{Schonberger1934}]\label{Schonberger}
    Let $G = (V, E)$ be a bridgeless cubic graph with specified edge $e^* \in E$. Then there exists a perfect matching of G that
    contains $e^*$.
\end{theorem}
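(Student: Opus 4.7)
My plan is to reduce the statement to Tutte's theorem applied to $G - u - v$, where $e^* = uv$. A perfect matching of $G$ containing $e^*$ is precisely the union of $\{e^*\}$ with a perfect matching of $G - u - v$, so it suffices to show that $G' := G - u - v$ satisfies Tutte's condition: $o(G' - S) \le |S|$ for every $S \subseteq V(G')$, where $o(\cdot)$ counts odd components. Writing $T := S \cup \{u,v\}$, this is equivalent to $o(G - T) \le |T| - 2$ for every $T \subseteq V(G)$ with $\{u,v\} \subseteq T$.

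The first step is to recall the standard Tutte-style proof of Petersen's theorem, which I plan to adapt. Since $G$ is cubic, $|V(G)|$ is even, so $o(G - T) \equiv |T| \pmod{2}$. Since $G$ is cubic and bridgeless, every odd component $C$ of $G - T$ is joined to $T$ by at least three edges: the number of such edges is odd by a handshake argument inside $C$, and bridgelessness rules out the value one. Summing the edges leaving $T$ over all odd components gives $3 \cdot o(G-T) \le 3|T|$, hence $o(G - T) \le |T|$.

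The main step, and the only place where the specified edge $e^*$ enters the argument, is to exclude the tight case $o(G - T) = |T|$ when $\{u,v\} \subseteq T$. In that case the counting above is saturated everywhere, so every vertex of $T$ must send all three of its incident edges to odd components of $G - T$. But $e^* = uv$ lies entirely inside $T$, so each of $u$ and $v$ sends only two edges out of $T$, a contradiction. Hence $o(G - T) \le |T| - 1$, and the parity observation sharpens this to $o(G - T) \le |T| - 2$, as required.

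I do not foresee a serious obstacle. The boundary case $T = \{u,v\}$ (where $S = \emptyset$) slots into the same counting: only four edges leave $T$, which permits at most one odd component, and parity then pins this to zero. So the whole argument is Petersen's proof plus one refinement exactly at the place where the original counting is tight.
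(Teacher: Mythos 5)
Your proof is correct. Note that the paper does not prove this statement at all: it is quoted as a classical result of Sch\"onberger with a citation, so there is no in-paper argument to compare against. What you give is the standard modern proof via Tutte's theorem --- the same counting that proves Petersen's theorem, sharpened at the tight case by the observation that the edge $e^*=uv$ keeps two edge-endpoints of $T$ inside $T$, so $e(T,V\setminus T)\le 3|T|-2$, whence $3\,o(G-T)\le 3|T|-2$ gives $o(G-T)\le |T|-1$, and parity (using $|V(G)|$ even, so $o(G-T)\equiv|T|\pmod 2$) yields $o(G-T)\le |T|-2$, which is exactly Tutte's condition for $G-u-v$. All the steps check out, including the boundary case $T=\{u,v\}$, and bridgelessness correctly excludes odd components joined by a single edge. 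The only point worth making explicit is that cubic bridgeless graphs have no loops and that the argument only improves if $u$ and $v$ are joined by parallel edges, so the proof is robust even in the multigraph setting; as written for simple graphs it is complete and self-contained, which is arguably more than the paper offers for this statement.
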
 
A perfect matching containing a specified edge $e^*$ in a bridgeless cubic graph can be found in $O(n \log^4 n)$ time~\cite{Therese}.

A graph $G$ is \emph{embedded} in a surface $S$ if the vertices of $G$ are distinct elements of $S$ and every edge of $G$ is a simple
arc connecting in $S$ the two vertices which it joins in $G$, such that its interior is disjoint from other edges and vertices. An
\emph{embedding} of a graph $G$ in $S$ is an isomorphism of $G$ with a graph $G'$ embedded in $S$. 

Let $G$ be a graph that is cellularly embedded in a surface $S$, that is, every face is homeomorphic to an open disk. Let $\pi = \{\pi_v
| v \in V(G)\}$ where $\pi_v$ is the cyclic permutation of the edges incident with the vertex $v$ such that $\pi_v(e)$ is the successor
of $e$ in the clockwise ordering around $v$. The cyclic permutation $\pi_v$ is called the \emph{local rotation} at $v$, and the set
$\pi$ is the \emph{rotation system} of the given embedding of $G$ in $S$. 

Let $G$ be a connected multigraph. A \emph{combinatorial embedding} of $G$ is a
pair $(\pi, \lambda)$ where $\pi = \{\pi_v | v \in V(G)\}$ is a rotation system,
and $\lambda$ is a signature mapping which assigns to each edge $e \in E(G)$ a
sign $\lambda(e) \in \{-1, 1\}$. If $e$ is an edge incident with $v \in V(G)$,
then the cyclic sequence $e, \pi_v(e), \pi_v^2(e), \dots$ is called the
$\pi$-\emph{clockwise ordering} around $v$ (or the \emph{local rotation} at
$v$). Given an embedding $(\pi, \lambda)$ of $G$ we say that $G$ is 
$(\pi, \lambda)$-\emph{embedded}. It is known that the combinatorial embedding uniquely determines a cellular embedding to some surface,
up to homeomorphism~{\cite[Theorems 3.2.4 and 3.3.1]{Mohar}}.

A \emph{closed walk} is a sequence $(v_0, e_0, v_1,$ $e_1, \dots, e_{n-1}, v_n)$ where $v_0 = v_n$ and 
for every $i$, $e_i = \{v_i,v_{i+1}\}$. We say that a collection of closed walks $C_1, \dots, C_n$ forms a 
\emph{partial circuit double cover} (or \emph{partial CDC}) if each edge is covered at most once by one of $C_i$'s or exactly 
twice by two different $C_i$ and $C_j$, and for every vertex $v$ and edges $e$ and $f$ where $e \cap f = \{v\}$, there exists at most
one closed walk $C_i$ such that $\{e, f\} \subseteq E(C_i)$.

\section{The Cactus Model}
Here we follow the notations and definition of \cite{Nagamochi-Book}. A general way to represent a subset of cuts within a graph $G$ involves constructing a cactus 
representation~$(T, \varphi)$. In this representation, $T$ is a graph, and $\varphi$ is a function mapping vertices from $V(G)$ to
$V(T)$. The cactus representation (or the \emph{cactus model}) was introduced in \cite{Dinic1976}. It was further utilized in
\cite{Gabow1991ApplicationsOA,Dinitz1993}. This model plays a crucial role in understanding both edge connectivity and
graph rigidity problems.

\begin{definition}
Let $G$ be a graph. A pair $(T, \varphi)$ with $\varphi: V(G) \to V(T)$ is called a cactus representation (or cactus model) for the graph $G$ if it meets the following criteria:
    \begin{enumerate}
        \item For an arbitrary minimum cut $\{S, V(T) - S\} \in C(T)$, the cut 
        $\{X, \bar{X}\}$ in $G$ defined by 
        $X = \{ u \in V(G) \mid \varphi(u) \in S \}$, and $\bar{X} = \{ u \in V(G) \mid \varphi(u) \in V(T) - S \}$
        is a minimum cut in $G$.
    
        \item Conversely, for every minimum cut $\{X, \bar{X}\}$ in $G$, there exists a minimum cut $\{S, V(T) - S\} \in C(T)$ such that
        $X = \{ u \in V(G) \mid \varphi(u) \in S \}$, and $\bar{X} = \{ u \in V(G) \mid \varphi(u) \in V(T) - S \}.$
    \end{enumerate}
\end{definition}

In this context, we use \emph{vertex} when referring to elements of $V(G)$ and \emph{node} for elements of $V(T)$. The set
$V(T)$ may include a node $x$ that does not correspond to any vertex $v \in V(G)$ with $\varphi(v) = x$; such a node is referred to as
an \emph{empty node}. We use $C(T)$ for the set of all minimum cuts in~$T$.

It was shown in \cite{Dinic1976} that every $G$ has a cactus representation using a \emph{cactus graph} -- a multigraph where 
every edge is in exactly one cycle. A special case of such graph is a tree with every edge doubled. 
It is known that when the connectivity is odd, the cactus representation is of this special type. In this case we will just call it a
tree of cuts (or cactus tree). For a $3$-edge-connected cubic graph $G$, the vertices of the graph $G$ correspond to the leaf nodes of
the cactus tree $T$ and all the interior nodes of $T$ are empty nodes (see Fig.~\ref{cactus-model}).  
We will also use the following estimate for the size of~$T$. 

\begin{figure}
    \begin{center}
        \includegraphics[scale=.18]{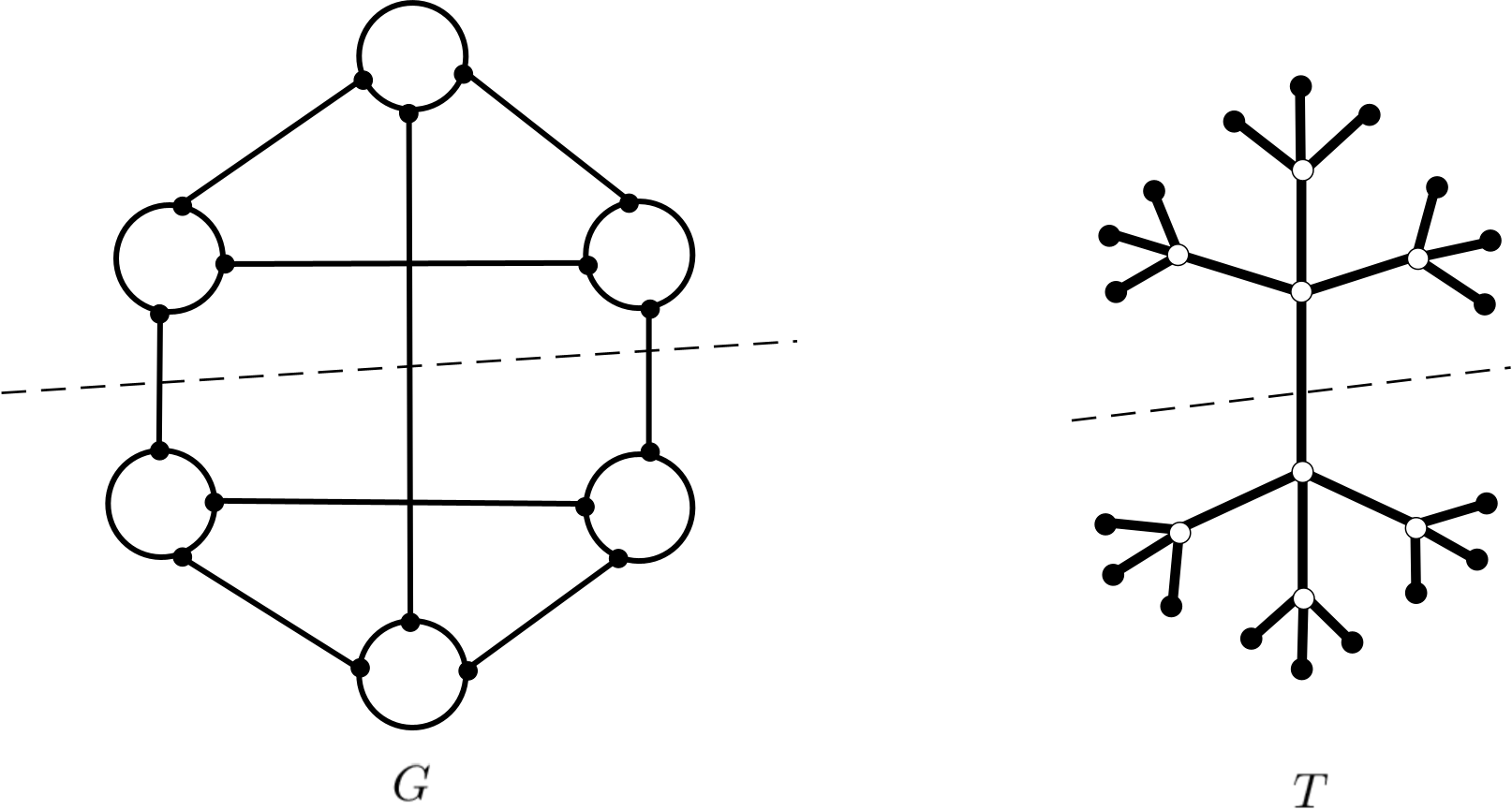}
    \end{center}
    \caption{A 3-edge-connected cubic graph $G$ and its cactus tree $T$. Small white circles represent empty vertices. Dashed lines represent a non-trivial $3$-edge cut in $G$ and its equivalent edge in $T$.}
    \label{cactus-model}
\end{figure}

\begin{lemma}[\cite{Dinitz1993}]\label{lemma-cactus}
    Let $G$ be a $3$-edge-connected cubic graph with $n$~vertices and let $T$ be its cactus tree. Then 
    $$|E(T)| \leq 2n-3.$$   
\end{lemma}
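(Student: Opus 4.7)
The plan is to apply a standard degree-counting argument to the tree $T$. Recall that for a $3$-edge-connected cubic graph the cactus model is a tree (in the sense explained in the excerpt), every vertex of $G$ corresponds to a leaf of $T$ (the trivial $3$-edge cut around each cubic vertex is itself a minimum cut, forcing a leaf in the model), and every internal node of $T$ is empty. Thus $T$ has exactly $n$ leaves; let $k$ denote the number of internal nodes, so $|V(T)| = n+k$ and, since $T$ is a tree, $|E(T)| = n+k-1$.

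The key structural ingredient I would use is that every internal node of $T$ has degree at least $3$. Suppose for contradiction that some empty internal node $u$ had degree exactly $2$, with incident edges $e_1$ and $e_2$. Removing either $e_1$ or $e_2$ from $T$ induces the same partition of the non-empty nodes --- the two ``sides'' of $u$ --- because $u$ itself contributes no vertex of $G$ to either side. Hence $e_1$ and $e_2$ encode the very same $3$-edge cut of $G$, and one of them is redundant. In the standard (reduced) cactus representation this is not allowed; equivalently, any cactus tree can be simplified by contracting degree-$2$ empty nodes without changing the cut family it represents, and we may assume $T$ is in this reduced form.

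With this degree lower bound in hand, the handshake identity yields
\[
2|E(T)| \;=\; 2(n+k-1) \;=\; \sum_{v \in V(T)} \deg(v) \;\geq\; n \cdot 1 + k \cdot 3 \;=\; n + 3k,
\]
which rearranges to $k \leq n-2$. Substituting back, $|E(T)| = n+k-1 \leq 2n-3$, as claimed. The only non-routine step in the plan is the degree lower bound on internal nodes, which I handle by appealing to the minimality of the reduced cactus representation; everything else is straightforward tree arithmetic.
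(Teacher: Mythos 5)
The paper does not actually prove this lemma --- it is imported as a black box from the cited reference \cite{Dinitz1993} --- so there is no internal proof to compare against; judged on its own, your argument is correct and gives a clean, self-contained justification. The tree arithmetic is exactly right: with $n$ leaves (one per vertex of $G$), $k$ empty internal nodes, $|E(T)| = n+k-1$, and the handshake bound $2|E(T)| \geq n + 3k$ yields $k \leq n-2$ and hence $|E(T)| \leq 2n-3$, which is tight precisely when every internal node has degree $3$. The one step that genuinely needs care is the degree lower bound on empty nodes, and you identify it correctly: the definition of a cactus representation given in the paper does not by itself forbid a degree-$2$ empty node (nor an empty leaf, which you implicitly exclude when counting exactly $n$ leaves --- note an empty leaf would make the corresponding minimum cut of $T$ induce the non-cut $(\emptyset, V(G))$, contradicting criterion~1), so the bound really is a statement about the \emph{canonical} (reduced) cactus, as produced by the construction of \cite{Dinic1976,Gabow1991ApplicationsOA} that the paper relies on; your appeal to reducedness is the standard way to close this. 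One small point of interpretation worth stating explicitly: you treat $T$ as the underlying tree of cuts rather than the ``tree with every edge doubled'' form of the cactus, which is consistent with how the paper uses $|E(T)|$ (e.g.\ $\sum_x d_x = 2|E(T)|$ in the complexity analysis), and indeed the bound $2n-3$ only makes sense under that reading.
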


\section{Well-Spread Perfect Matchings in Bridgeless Cubic Graphs}\label{3cut-alg-section}

\def\Abar{{\bar A}} 
\def\abar{{\bar a}} 

In this section, we present an algorithm that finds a perfect matching in a $3$-edge-connected cubic graph that intersects all $3$-edge cuts
in exactly one edge and is substantially faster than the algorithm introduced by Boyd et al.~\cite{Boyd}. 
We first explain the underlying technique. Let $G$ be a graph with a 3-edge cut $\{A, \Abar\}$; that is
$|\delta(A)| = 3$. We let $G_1 = G/A$ (all vertices in~$A$ are contracted to a single vertex~$a$)
and $G_2 = G/\Abar$ (with a new vertex~$\abar$). A well-spread matching $M$ in~$G$ contains exactly one edge of~$\delta(A)$, 
thus it gives us a perfect matching~$M_1$ in~$G_1$ and $M_2$ in~$G_2$. As edges out of $a$ in~$G_1$ 
and out of $\abar$ in~$G_2$ are in natural correspondence with edges of~$\delta(A)$, we will identify them 
and say that $M_1$ and $M_2$ \emph{agree on $\delta(A)$}. 
For our algorithm the following is crucial.

\begin{theorem}\label{thm:PMcombine}
  Let $G$, $A$, $G_1$ and $G_2$ be as above. 
  Given a well-spread perfect matching $M_1$ in~$G_1$ and $M_2$ in~$G_2$ that agree on the cut $\delta(A)$, 
  the union $M_1 \cup M_2$ is a well-spread perfect matching in~$G$. 
\end{theorem}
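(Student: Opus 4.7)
The plan is to verify two properties of $M := M_1 \cup M_2$: that it is a perfect matching of $G$, and that $|M \cap C| = 1$ for every $3$-edge cut $C$ of $G$. The perfect matching property is straightforward: the contracted vertices $a$ and $\abar$ each have degree $3$ in their respective graphs, so $M_1$ and $M_2$ each contain exactly one edge of $\delta(A)$; by hypothesis this is the same edge $e^*$, whence $M_1 \cap M_2 = \{e^*\}$, and a direct check shows every vertex of $G$ is covered exactly once (with vertices in $\Abar$ matched by $M_1$, vertices in $A$ matched by $M_2$, and the two endpoints of $e^*$ each matched only by $e^*$).

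The heart of the argument is the well-spread property. Given a $3$-edge cut $C = \delta(X)$ of $G$, the crucial step is to show that $C$ cannot properly cross $\delta(A)$: if the four sets $X \cap A$, $X \cap \Abar$, $\bar X \cap A$, $\bar X \cap \Abar$ were all non-empty, then submodularity of the cut function combined with $3$-edge-connectivity would force each of the four corner cuts $\delta(X \cap A)$, $\delta(X \cap \Abar)$, $\delta(\bar X \cap A)$, $\delta(\bar X \cap \Abar)$ to have size exactly~$3$. Summing these four sizes, double-counting the inter-corner edges, and using $|\delta(A)| = |C| = 3$ forces the three edges of $\delta(A)$ to split as $\tfrac{3}{2} + \tfrac{3}{2}$ between the two ``diagonals'' of the crossing, an integrality contradiction. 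Hence one of the four corners is empty, and up to replacing $X$ by $\bar X$ (and possibly swapping $A \leftrightarrow \Abar$), we may assume $A \subseteq X$.

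Under this assumption, set $X' := (X \cap \Abar) \cup \{a\}$ in $G_1$. Via the natural identification of edges of $\delta(A)$ in $G$ with the edges incident to $a$ in $G_1$, the cut $\delta_{G_1}(X')$ coincides with $C$ as an edge set. Thus $\delta_{G_1}(X')$ is a $3$-edge cut of $G_1$, and $|M_1 \cap C| = 1$ by the well-spread property of $M_1$. Because $A \subseteq X$, the cut $C$ contains no edge inside $A$, whereas $M_2 \setminus \{e^*\}$ lies entirely inside $A$; so $M_2$ contributes to $M \cap C$ only through $e^* \in M_1 \cap M_2$, giving $|M \cap C| = |M_1 \cap C| = 1$. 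The symmetric case $\Abar \subseteq X$ is handled identically using $M_2$ and $G_2$, and $C = \delta(A)$ is immediate. The main obstacle is the non-crossing claim for $3$-edge cuts in $3$-edge-connected graphs (closely related to why the cactus model degenerates to a tree in odd connectivity, as noted in Section~3); everything else is routine translation between a cut in $G$ and a cut in $G_1$ or $G_2$, followed by an appeal to the well-spread hypothesis.
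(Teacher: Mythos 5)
Your proof is correct and follows essentially the same route as the paper: the perfect-matching part is the same routine check, and the well-spread part rests on exactly the fact the paper invokes (3-edge cuts in a 3-edge-connected graph do not cross), which the paper delegates to the citation of Boyd et al.\ while you supply the submodularity-plus-parity uncrossing argument and the translation of a cut of $G$ into a cut of $G_1$ or $G_2$ yourself. The only blemish is cosmetic: in the counting step the $\tfrac32+\tfrac32$ split concerns the edges of $\delta(A)$ (equivalently of $\delta(X)$) distributed over the two corner pairs $(X\cap A, X\cap\bar A)$ and $(\bar X\cap A,\bar X\cap\bar A)$ rather than the empty ``diagonals,'' but the integrality contradiction stands as you state it.
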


\begin{proof}
  The fact that $M_1 \cup M_2$ is a perfect matching is clear as these matchings agree on~$\delta(A)$. 
  The fact that $M$ is well-spread follows from the fact that 3-edge cuts in a 3-edge-connected graph 
  do not cross, see \cite{Boyd} for details. 
  \qed
\end{proof}

Boyd et al.~\cite{Boyd} use the above theorem together with repeated finding of a ``peripheral 3-edge-cut'' $\{A, \Abar\}$: 
one where $G/A$ is internally 4-edge-connected. We refine their approach by utilizing (and updating) the model for 
all 3-edge cuts that will allow us to find peripheral cuts quickly. Also, we will use the model of the cuts 
to quickly decompose the graph $G$ in $G/A$ and $G/\Abar$ and then combine them back. 

We will also use the following information about structure of cuts in $G_1$ and $G_2$.

\begin{theorem}\label{thm:cactuscombine}
  Let $G$, $A$, $G_1$, $G_2$, $A$ and $\Abar$ be as above. 
  Let $(\varphi, T)$ be the cactus tree for~$G$. 
  Let $e = x_1 x_2$ be the edge of~$T$ corresponding to~$A$; that is the components of 
  $T-e$ are $T'_1$, $T'_2$, $x_i \in V(T'_i)$ and $A = \varphi^{-1}(V(T'_2))$. 
  Put $T_i = T'_i + e$ for $i=1, 2$. 
  Define $\varphi_1$ as a restriction of $\varphi$ to $\Abar$ 
  and put $\varphi_1(a) = x_1$. We define $\varphi_2$ symmetrically. 

  Then $(\varphi_i, T_i)$ is the cactus tree representation of 3-edge cuts in~$G_i$
  (for $i=1,2$). 
\end{theorem}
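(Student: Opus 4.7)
My plan is to verify the two defining conditions of a cactus representation by constructing an explicit bijection between the edges of $T_1$ and the 3-edge cuts of $G_1$ that respects $\varphi_1$; the $i=2$ case is identical by symmetry. The argument factors through a three-way correspondence
\[
E(T_1) \;\longleftrightarrow\; \{\text{3-edge cuts $\{X,\bar X\}$ of $G$ with $A \subseteq X$ or $A \subseteq \bar X$}\} \;\longleftrightarrow\; \{\text{3-edge cuts of $G_1$}\}.
\]

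First I would check that $G_1$ is 3-edge-connected and cubic, so that its minimum cuts coincide with its 3-edge cuts: cubicity holds because $a$ inherits degree $|\delta(A)|=3$ and every other vertex keeps its degree, and 3-edge-connectedness holds because any $k$-edge cut of $G_1$ lifts to a $k$-edge cut of $G$ by replacing $a$ with $A$ on whichever side contains $a$. The right-hand bijection above is then immediate: a 3-edge cut of $G_1$ lifts to a 3-edge cut of $G$ with $A$ entirely on one side via the natural identification $\delta(A)\leftrightarrow\delta(\{a\})$, and contraction is the inverse map.

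For the left-hand bijection I would invoke the cactus property of $(\varphi, T)$, which bijects the 3-edge cuts of $G$ with the edges of $T$. Since $T'_2$ is a connected subtree, $V(T'_2)$ stays inside a single component of $T - f$ exactly when $f \notin E(T'_2)$, that is, when $f \in E(T'_1) \cup \{e\} = E(T_1)$; thus the edges of $T_1$ are exactly the edges of $T$ whose associated 3-edge cut leaves $A$ unsplit. Composing the two bijections yields the desired bijection between $E(T_1)$ and the 3-edge cuts of $G_1$.

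The remaining task is to check that for every $f \in E(T_1)$ the partition of $V(G_1)$ induced by $\varphi_1$ and the two components of $T_1 - f$ agrees with the 3-edge cut of $G_1$ that the composition above assigns to $f$. The case $f = e$ corresponds to the trivial cut $\delta(\{a\}) = \delta(A)$; the case $f \in E(T'_1)$ reduces to the cactus property of $(\varphi,T)$ once one observes that the stub contributed by the re-attached edge $e$, together with the chosen image of $a$ under $\varphi_1$, places $a$ on the same side of each resulting cut as $A$ sits on in the corresponding cut of $G$. This orientation check is the only subtle point; beyond it, the proof is routine bookkeeping, and I do not anticipate a genuine obstacle.
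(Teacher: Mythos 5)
Your argument is sound in substance, but it takes a different route from the paper: the paper dispatches this theorem in one sentence, appealing (as in the proof of Theorem~\ref{thm:PMcombine}) to the fact that minimum odd edge-cuts do not cross, and omits all details, whereas you verify the two defining conditions of a cactus representation directly from the assumed representation $(\varphi,T)$ of $G$, composed with the standard correspondence between cuts of $G_1=G/A$ and cuts of $G$ having $A$ entirely on one side. Your route never has to re-prove any uncrossing statement (non-crossing is already encoded in the existence of $T$) and produces the claim exactly in the form the algorithm uses, at the cost of some bookkeeping. Two points in that bookkeeping need patching. First, the paper's definition of a cactus representation only gives that every minimum cut of $T$ induces a minimum cut of $G$ and that every minimum cut of $G$ is induced by \emph{some} minimum cut of $T$; it does not give a bijection between $E(T)$ and the $3$-edge cuts of $G$ (empty nodes can make two tree edges induce the same cut), and correspondingly your claim that an edge $f\in E(T'_2)$ never induces a cut leaving $A$ unsplit can fail when all nodes of $T'_2$ on the $x_2$-side of $f$ are empty. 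Fortunately only well-definedness and surjectivity are needed: if $f\in E(T'_2)$ induces a cut that does not split $A$, its small side is a nonempty subset of $A$ (nonempty by condition (1) for $(\varphi,T)$), hence equals $A$, so the induced cut is $\{A,\bar A\}$, which is also induced by $e\in E(T_1)$; thus condition (2) for $(\varphi_1,T_1)$ still holds, and your verification for $f\in E(T'_1)\cup\{e\}$ gives condition (1). Second, your $f=e$ case (and the ``orientation check'') tacitly uses $\varphi_1(a)=x_2$, i.e.\ the contracted vertex mapped to the leaf $x_2$ of $T_1$; with the literal $\varphi_1(a)=x_1$ of the statement, the edge $e$ would induce the degenerate partition $(\emptyset,V(G_1))$. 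This is evidently a typo in the statement (the intended image is $x_2$), but you should say so explicitly rather than rely on it silently.
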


The proof follows the same idea as that of Theorem~\ref{thm:PMcombine}: minimum odd edge-cuts do not cross. 
We omit the details in this extended abstract. 


In our application we have a 3-regular 3-edge-connected graph, so all vertices 
are mapped to leaf nodes of the cactus tree and nontrivial 3-cuts correspond to edges that are not incident with a leaf. 
In particular, when the cactus tree is a star there are no nontrivial 3-cuts and we may simply find any perfect matching. 
In a sense, the purpose of our algorithm is to efficiently combine matchings coming from these ``special cases''
of internally 4-edge-connected graphs. 

To this end, we pick a root in the cut model~$T$. For a vertex $x$, let $e_x$ be the edge connecting $x$ to its parent. 
We let $C_x$ be the three edges in the cut corresponding to~$e_x$. We let $A_x$ be the vertices 
of~$G$ that are mapped by $\varphi$ to descendants of~$x$ (including~$x$), so that $C_x = \delta(A_x)$. 

We do not use this definition directly, instead we recursively update $A_x$, $C_x$ while traversing the tree. 
We can compute $A_x$ as a union of $A_y$ for all children $y$ of~$x$. 
We can compute $C_x$ by going over $C_y$ for all children $y$ of~$x$, excluding 
edges that lie within $A_x$ (between $A_y$ and~$A_{y'}$ for two children of~$x$). 
To do this efficiently we utilize the Union-Find algorithm. 

In a typical step of the decomposition part of the algorithm, we use Theorem~\ref{thm:cactuscombine}, 
with the new graphs being $G_x$ (which we store for later) and updated version of~$G$. 
To illustrate the algorithm, in Figure~\ref{cactus-updated} we show one step including the 
updated trees. We see that $T_x$ is a star -- which means $G_x$ is internally 4-edge-connected. 
Updated $T$ simply removes children of~$x$. However, we actually do not need to update~$T$ and 
compute $T_x$ in the algorithm. 

In a typical step of the assembling part of the algorithm, we use Theorem~\ref{thm:PMcombine}. 
For graph~$G$ we have our well-spread matching from the recursion. For the other graph, 
$G_x$, we compute it easily, as the graph is internally 4-edge-connected, so any perfect matching 
will do. 
Figure~\ref{Tree-Process2} illustrates this part of the algorithm. 

\begin{theorem}
    The output of \hyperref[algorithm-3cut]{Algorithm~1} is a well-spread perfect matching.
\end{theorem}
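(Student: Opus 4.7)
The plan is to prove correctness by induction on the number of edges of the cactus tree~$T$, using Theorem~\ref{thm:cactuscombine} to peel off one subtree at a time and Theorem~\ref{thm:PMcombine} (together with Schönberger's Theorem~\ref{Schonberger}) to reassemble.

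In the base case $T$ is a star, so $G$ has no nontrivial $3$-edge cut; its only $3$-edge cuts are the trivial cuts around single vertices, and any perfect matching---which exists by Petersen's theorem---contains exactly one edge of each such cut. In particular the matching returned by the algorithm at the bottom of the recursion is well-spread.

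For the inductive step I choose a non-leaf node $x$ of $T$ all of whose children are leaves; such a node exists because $T$ is a finite tree that is not a star. Let $A_x$ be the set of vertices of~$G$ mapped to descendants of~$x$, and let $C_x = \delta(A_x)$ be the associated $3$-edge cut. Applying Theorem~\ref{thm:cactuscombine} with $A = A_x$, the graph $G_x = G/\Abar_x$ has a cactus tree that is a star (so $G_x$ is internally $4$-edge-connected), while the reduced graph $G' = G/A_x$ has a strictly smaller cactus tree obtained by deleting the children of~$x$. By the inductive hypothesis the algorithm produces a well-spread perfect matching $M'$ of~$G'$; being well-spread, $M'$ meets $C_x$ in exactly one edge~$e$. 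The algorithm then finds a perfect matching $M_x$ of~$G_x$ containing the copy of~$e$ in~$G_x$, which is possible by Theorem~\ref{Schonberger}, and $M_x$ is automatically well-spread because the only $3$-edge cuts of~$G_x$ are trivial. Since $M'$ and $M_x$ agree on $\delta(A_x)$, Theorem~\ref{thm:PMcombine} gives that $M' \cup M_x$ is a well-spread perfect matching of~$G$, and this is precisely the assembly step performed by the algorithm.

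The main obstacle I foresee is not in the mathematical content of the induction but in matching it to the concrete bookkeeping of \hyperref[algorithm-3cut]{Algorithm~1}: I need to show that when the algorithm processes node~$x$, the graph it has effectively reduced to really is~$G/A_x$ and that its stored version of~$G_x$ really is~$G/\Abar_x$. This follows by a secondary induction on the post-order traversal, using the recursive update formulas for $A_x$ and $C_x$ together with the Union--Find data structure described in the text. Once this invariant is established, the argument above applies at every node, and the matching produced at the root of~$T$ is the desired well-spread perfect matching of~$G$.
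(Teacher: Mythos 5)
Your proof is correct and follows essentially the same route as the paper, which simply cites Theorems~\ref{thm:PMcombine} and~\ref{thm:cactuscombine} together with the surrounding discussion of the rooted cactus tree; your induction on the tree (star base case, peeling a deepest internal node, Schönberger's theorem for the matching through the prescribed cut edge, and the bookkeeping invariant identifying the algorithm's current graph with $G/A_x$ and the stored graph with $G/\Abar_x$) is just a more detailed write-up of that same argument.
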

 
\begin{proof}
  Follows from Theorem~\ref{thm:PMcombine} and~\ref{thm:cactuscombine} and the discussion above. 
  \qed
\end{proof}

\begin{algorithm}
\caption{}\label{algorithm-3cut}
\begin{algorithmic}[0]
  \Require A $3$-edge-connected cubic graph $G$.
  \Ensure A well-spread perfect matching $M$.
    \State Find $\varphi: G \to T$ using \cite{Gabow1991ApplicationsOA}.
    \State Choose any node of $T$ as root.
    \LComment{Decomposing $G$ along $3$-edge-cuts}  
    \ForAll{$x \in V(T)$ in post-order}
        \If {$x$ is the root}  \Comment{no proper $3$-edge-cut remains}
          \State Break out of the for loop. 
        \ElsIf {$x$ is a leaf}                \Comment{Dealing with a trivial cut}
          \State $A_x \gets \varphi^{-1}(x)$  \Comment{Single vertex}
          \State $C_x \gets \delta(A_x)$      \Comment{three edges incident to it}
        \Else 
            \State $A_x \gets \bigcup \{ A_y \mid \mbox{$y$ a child of~$x$} \}$   \Comment{Use Union-Find}
            \State $C_x \gets \{u,v\} \in \bigcup \{ C_y \mid \mbox{$y$ a child of~$x$\} } \mbox{s.t. 
               only one of $u$, $v$ is in $A_x$}$ 
            \State $G_x \gets  G/\Abar_x$   \Comment{contract $\Abar_x$ to a single vertex $\abar_x$} 
            \State put a link to $C_x$, $G_x$ to $x$
            \State $G \gets  G/A_x$  \Comment{contract $A_x$ to a single vertex $a_x$} 
            \State $\varphi \gets \varphi/\Abar_x \cup \{(a_x, x)\}$
          \EndIf
          
    \EndFor
    \LComment{Assembling the perfect matching} 
    \ForAll{$x \in V(T)$ in reverse pre-order}
       \If{$x$ is a root of $T$}
          \State Find any perfect matching $M$ in $G$.   \Comment{$G$ is now internally 4-edge-connected}
       \ElsIf{$x$ is not a leaf}
         \LComment{First we find which edge of~$C_x$ does $M$ use.} 
         \State $b \gets \text{vertex }~s.t.~a_xb \in M$
         \State $a  \gets \text{vertex }~s.t.~ab \in C_x$
         \State Find a perfect matching $M_x$ in $G_x$ using the edge $a \abar_x$.  \Comment{$G_x$ is int.4-edge-connected}
         \State $M \gets M \cup M_x$
       \EndIf
    \EndFor
    \State \Return $M$
\end{algorithmic}
\end{algorithm}

\begin{figure}
    \begin{center}
        \includegraphics[scale=.18]{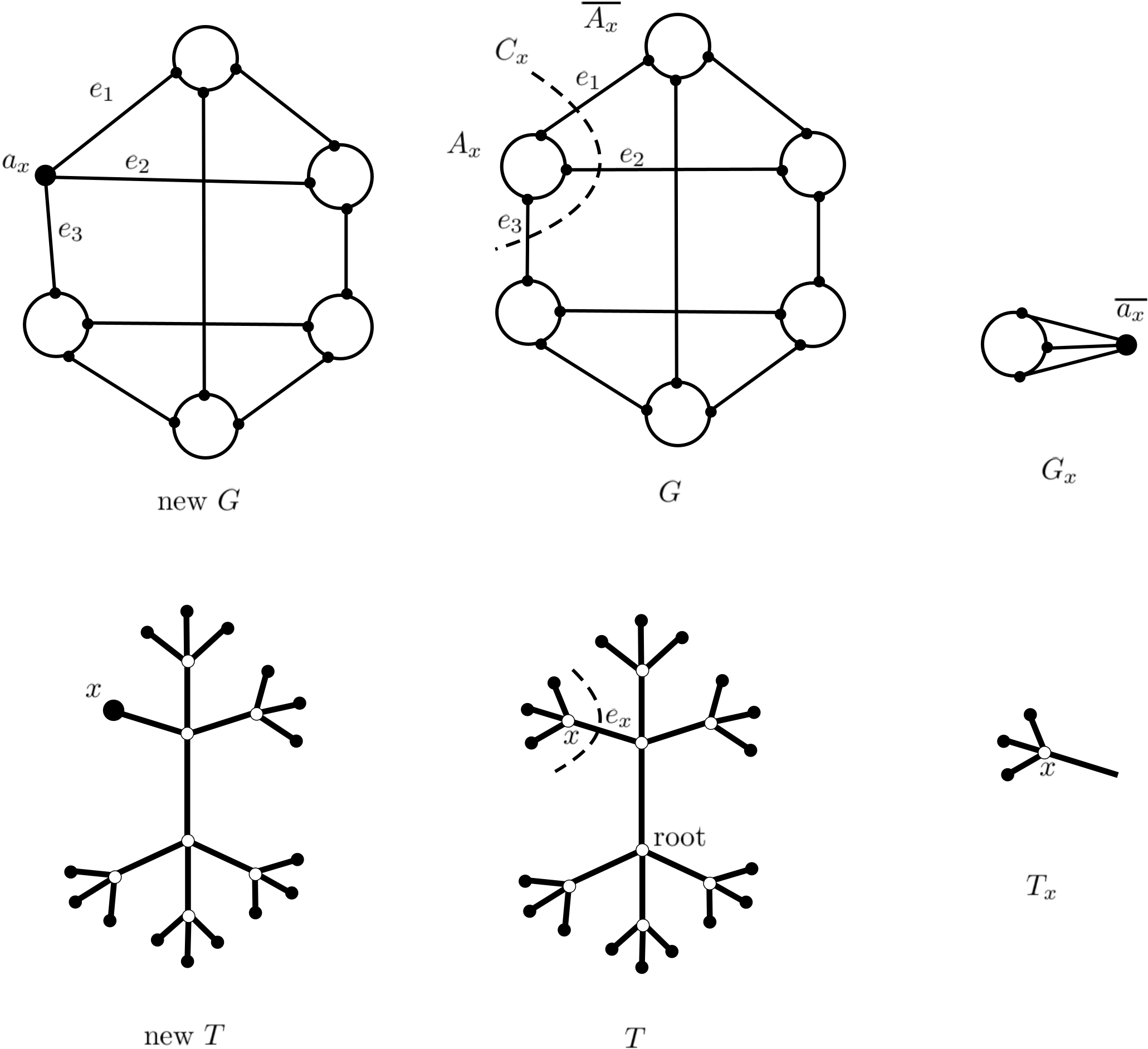}
    \end{center}
    \caption{A typical step in the decomposition part of the algorithm. We show a concrete 
      3-edge-connected cubic graph $G$ and its cactus tree $T$, together with the updated version 
      of $G$ and $T$ and the internally 4-edge-connected graph $G_x$ with the star cactus tree $T_x$. 
      }
    \label{cactus-updated}
\end{figure}

\begin{figure}
    \begin{center}
        \includegraphics[scale=.18]{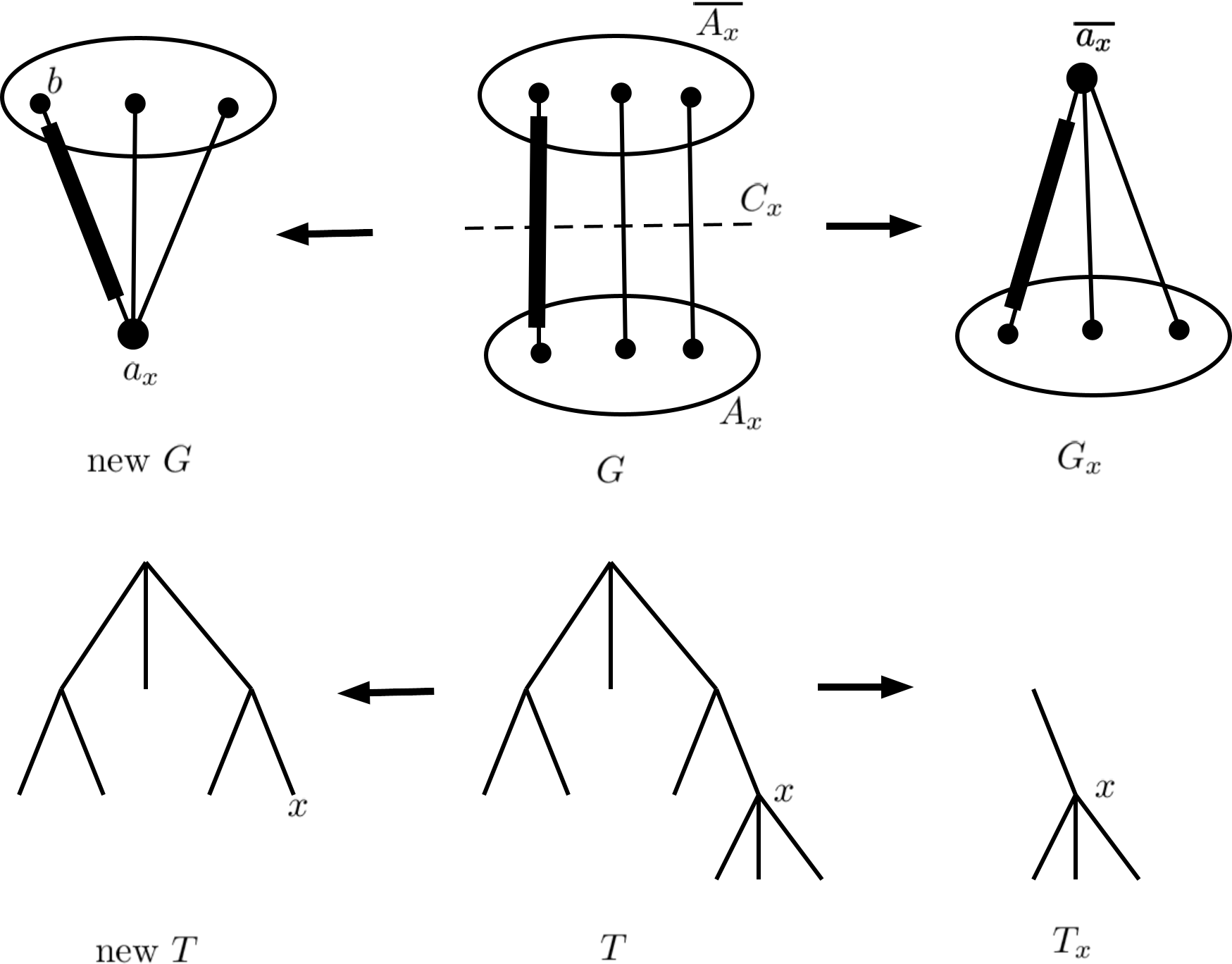}
    \end{center}
    \caption{A typical step in the assembly part of the algorithm: we combine 
      a perfect matching $M$ in ``new G'' and $M_x$ in $G_x$ to get a matching 
      in the original graph~$G$.}
    \label{Tree-Process2}
\end{figure}

\subsection{Complexity Analysis}
The construction of the tree of cuts (the cactus model) can be done in time $O(n \log n)$~\cite{Gabow1991ApplicationsOA}.
By Lemma~\ref{lemma-cactus} we have $|V(T)| = O(n)$. 
Next, the algorithm decomposes the graph $G$ along $3$-edge cuts. Here the main work is keeping track of the 
sets $A_x$ and $C_x$, updating the graph~$G$ and $G_x$. For this we use the standard Union-Find algorithm 
plus a constant amount of work for changing three edges of the cut $C_x$ at each step. \
Thus the total time required for this loop is $O(n \alpha(n))$. 

In the  assembly step, we repeatedly look for a perfect matching in an internally 4-edge-connected cubic graph~$G_x$ 
containing a given edge. By \cite{Therese}, this can be done in time $O(|G_x|\log^4|G_x|)$.
(The initial step where $x$ is the root is faster, as we don't need to specify an edge, but this bound works as well.) 
The remaining work (combining the matchings) needs only a constant time per edge, so linear in total. 

Note that $|G_x|$ is the degree of~$x$ in~$T$; let $d_x$ denote this quantity. 
We know that $\sum_x d_x = 2|E(T)| = O(n)$. The time complexity is thus 
$$
   O(n \log n) + O(n \alpha (n)) + O(n) + \sum_{x \in V(T)}  d_x \log^4 d_x . 
$$

So the leading term in the time complexity is the final sum. 
As the function $f(d) = d \log^4 d$ is convex, the sum will be maximal when one $d_x$ is as large as possible (namely, $O(n)$)
and the other as small as possible (namely, 1) -- otherwise we can increase the sum of $f(d_x)$, while keeping sum of $d_x$ constant. 
For this extreme case we get our bound for time complexity of the algorithm, 
$O(n \log^4 n)$.

\begin{theorem}
    \hyperref[algorithm-3cut]{Algorithm~1} finds a well-spread perfect matching $M$ in a $3$-edge-connected cubic graph $G$ in time $O(n \log^4 n)$.
\end{theorem}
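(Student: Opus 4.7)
The plan is to bound the running time of Algorithm~1 stage by stage, using the facts already assembled in the Complexity Analysis subsection. Correctness was established in the previous theorem, so only the time bound remains.

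I would split the work into three parts. First, preprocessing: build the cactus tree $(\varphi, T)$ by Gabow's algorithm in $O(n \log n)$ time; by Lemma~\ref{lemma-cactus}, $|V(T)| = O(n)$. Second, the decomposition loop: at each internal node $x$, merge the children's sets $A_y$ via Union--Find and scan their three-edge cuts $C_y$ (each of size three) to extract $C_x$; summed over the tree this is $O(n \alpha(n))$. Third, the assembly loop: for each non-leaf $x$, invoke the algorithm of \cite{Therese} to find a perfect matching through a specified edge in the internally $4$-edge-connected graph $G_x$, at cost $O(d_x \log^4 d_x)$, where $d_x$ is the degree of $x$ in $T$ (equivalently, $|V(G_x)|$).

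The final complexity then reduces to bounding $\sum_{x \in V(T)} d_x \log^4 d_x$ subject to $d_x \ge 1$ and $\sum_x d_x = 2|E(T)| = O(n)$. Since $f(d) = d \log^4 d$ is convex on $[1, \infty)$, a standard smoothing argument shows this sum is maximized when the degree mass is concentrated on a single node: one $d_x = \Theta(n)$ and all others equal to~$1$. Plugging this extremal configuration into $\sum f(d_x)$ yields $O(n \log^4 n)$, which dominates both the $O(n \log n)$ preprocessing and the $O(n \alpha(n))$ decomposition contributions, giving the claimed overall bound.

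The main obstacle is making the convexity/smoothing step precise. Concretely, one must verify that replacing a pair $(d_x, d_y)$ with $d_x \ge d_y \ge 2$ by $(d_x+1, d_y-1)$ does not decrease $\sum f(d_i)$; this is exactly the discrete convexity inequality $f(d_x + 1) - f(d_x) \ge f(d_y) - f(d_y-1)$, which follows from monotonicity of the forward difference of $f$ on $[1,\infty)$. Iterating this smoothing concentrates all degree mass on a single coordinate, after which plugging into the sum yields the required estimate and completes the proof.
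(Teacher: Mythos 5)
Your proposal is correct and follows essentially the same route as the paper: cactus construction in $O(n\log n)$, Union--Find-based decomposition in $O(n\alpha(n))$, per-node matching calls costing $O(d_x\log^4 d_x)$ with $\sum_x d_x = 2|E(T)| = O(n)$, and a convexity argument showing the sum is maximized when the degree mass concentrates on one node, yielding $O(n\log^4 n)$. Your explicit smoothing step merely spells out the discrete convexity inequality that the paper invokes implicitly, so no substantive difference remains.
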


\section{Application}
In~\cite{Ghanbari}, Ghanbari and \v{S}amal establish an upper bound of $\frac{n}{10}$ on the number of singular edges in an embedding of a bridgeless cubic graph on a surface. They also raise the question of how efficiently one can find a perfect matching in a bridgeless cubic graph that contains no odd cut of size $3$ — an essential step in determining the time complexity of finding such an embedding. Here, we answer their question for $3$-edge connected cubic graphs and describe an algorithm that constructs an embedding with at most $\frac{n}{10}$ bad edges. To provide context, we first state their results.

\begin{lemma}[\cite{Ghanbari}]\label{Lemma-PCDC}
Let $G$ be a bridgeless cubic graph, and $C_1, C_2, \dots, C_t$ be a collection of closed walks in $G$. If 
$C_1, C_2, \dots, C_t$ form a partial CDC, 
then there is an embedding $(\pi, \lambda)$ of $G$ where $C_1, \dots, C_t$ are some of the facial walks of $(\pi, \lambda)$.
Moreover, such an embedding can be found by a linear time algorithm. 
\end{lemma}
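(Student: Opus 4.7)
The plan is to build the combinatorial embedding $(\pi, \lambda)$ directly from the partial CDC so that each $C_i$ becomes a facial walk. The key local object is a \emph{corner} at a vertex $v$: a pair of cyclically consecutive edges in the rotation $\pi_v$, together with the face that fills it. In a cubic graph every vertex has exactly three corners, and the assignment of faces to corners — together with the rotation — completely determines the face-tracing procedure.

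First I would scan the walks $C_1, \ldots, C_t$ and produce a partial corner assignment: every time a walk $C_i$ passes through a vertex $v$ via two edges $e, f$, I mark the pair $\{e, f\}$ at $v$ as a corner belonging to face $C_i$. The partial CDC hypotheses guarantee two compatibilities needed downstream: no pair $\{e, f\}$ of incident edges is marked by two different walks (since for any two edges at $v$, at most one $C_i$ contains both), and each edge of $G$ is used at most twice overall, with its two sides going to distinct faces. Hence the partial corner assignment is well defined and consistent.

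Next I would choose an arbitrary cyclic order $\pi_v$ of the three edges at every vertex $v$ — note that in a cubic graph any such order already realises all three pairs of incident edges as corners, so the rotation itself places no constraint on which pairs are corners — and fix the signature $\lambda$ edge by edge. The sign $\lambda(e)$ controls, at each endpoint of $e$, which of the other two edges a face-trace exits via upon entering through $e$; the marked corners impose at most one constraint at each endpoint of $e$, giving at most two constraints in total on the single $\pm 1$ bit $\lambda(e)$, and the condition that the two uses of a doubly covered edge must go to different sides makes these two constraints compatible. Any unmarked corners then assemble into additional facial walks not among $C_1, \ldots, C_t$.

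The main obstacle is to verify that the face-tracing procedure under the resulting $(\pi, \lambda)$ actually reproduces each $C_i$ as a single facial walk, rather than splitting it into several shorter walks or merging two of the walks. This follows because face-tracing is deterministic from the rotation, the signature and the induced corner labels, and by construction the corners along $C_i$ are exactly those needed to follow $C_i$ from one transit of a vertex to the next. Every phase of the construction — reading the walks, marking corners, choosing rotations, fixing signs, and reading off the remaining facial walks — costs constant work per edge occurrence or vertex occurrence, so the whole algorithm runs in linear time.
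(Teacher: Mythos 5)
The paper you are checking against does not actually prove Lemma~\ref{Lemma-PCDC}: it is quoted from~\cite{Ghanbari}, so there is no internal proof to compare with, and your argument has to stand on its own. In substance it does, and it is the natural construction: in a cubic graph the three face transitions at a vertex are always exactly the three pairs of incident edges, for any rotation and signature, so the rotation can indeed be chosen arbitrarily and all the real freedom sits in the signature; your observation that each walk visits every vertex and edge at most once (forced by the covering condition) and that no corner is claimed by two walks (forced by the condition on pairs $\{e,f\}$) is what makes the corner marking well defined. The one place you should tighten is the role of $\lambda(e)$: it does not control, independently at each endpoint, which edge a face-trace exits by when entering through $e$ --- that exit also depends on the orientation accumulated along the whole face. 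What $\lambda(e)$ selects, once the rotations at the endpoints $u,v$ are fixed, is which of the two corners containing $e$ at $u$ is glued across $e$ to which of the two corners containing $e$ at $v$ (flipping the sign swaps this pairing). A walk traversing $e$ forces this single bit; if a second walk also traverses $e$, the partial-CDC corner condition guarantees the two walks use distinct corners at both endpoints, so the second walk forces the complementary pair of the \emph{same} pairing and the two requirements automatically coincide --- that, rather than an appeal to the two uses ``going to different sides'' (which is not a hypothesis), is the precise reason your two constraints on one bit are compatible. With this repair your final verification is immediate: starting the face-trace on a flag of $C_i$, the vertex transitions are automatic and the edge gluings were chosen to continue along $C_i$, so the orbit closes up after one traversal of $C_i$ and cannot split or merge; the linear-time claim is clear since every phase is constant work per edge or vertex occurrence. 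Note also that the arbitrariness of the rotations is special to cubic graphs; at higher degrees the corner structure depends on $\pi_v$ and this shortcut would fail.
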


\begin{theorem}[\cite{Ghanbari}]\label{THM-(n/10)-bound}
Let $G$ be a bridgeless cubic graph. There exists an embedding of $G$ with at most $\frac{n}{10}$ singular edges.
\end{theorem}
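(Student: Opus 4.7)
The plan is to reduce the problem to constructing a partial CDC and then invoking Lemma~\ref{Lemma-PCDC}. I would start from a well-spread perfect matching $M$ in $G$, which exists in every bridgeless cubic graph by Kaiser et al.~\cite{Kaiser}, and consider the complementary 2-factor $F = G \setminus M$; this is a disjoint union of cycles $C_1, \ldots, C_t$. These cycles trivially form a partial CDC: each $F$-edge is covered exactly once, and at every vertex the two $F$-edges lie in a common cycle, so no two incident edges are shared by two distinct walks. Lemma~\ref{Lemma-PCDC} then produces an embedding in which every $C_i$ is a facial walk. Since each $F$-edge already lies on one $C_i$-face and therefore on exactly one other face, it cannot be singular; the only possibly singular edges are the matching edges.

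To cut down the singular count, I would augment the partial CDC by adding short closed $M$--$F$-alternating walks that doubly cover as many $M$-edges as possible. A matching edge covered twice by the partial CDC cannot be singular, so only matching edges left uncovered after the augmentation remain as candidates. The well-spread property of $M$ plays its role here: it prevents matching edges from being clustered near a small edge cut and therefore guarantees an abundant supply of short closing walks through which $M$-edges can be paired up two at a time.

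The hard part will be producing the exact constant $1/10$. I would expect a discharging / averaging argument: charge every still-uncovered $M$-edge to a constant-sized local witness (say, a short arc of some $C_i$ together with its incident matching edges), show that each vertex is charged at most a bounded number of times, and translate the resulting local inequality into the global bound $\tfrac{n}{10}$. Controlling overlaps between witnesses and squeezing out the precise constant is where the bulk of the technical work will sit; a slack of even a factor of two in the local estimate would weaken the bound, so the inequalities have to be tight.
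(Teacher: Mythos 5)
Your first step is sound and matches the start of the argument in \cite{Ghanbari}: take a well-spread perfect matching $M_1$, note that the cycles of the complementary $2$-factor form a partial CDC, and apply Lemma~\ref{Lemma-PCDC} so that only edges left uncovered by the partial CDC can be singular (in fact an edge covered even once is already non-singular, since a cycle traverses it only once and the opposite side must belong to a different face). The genuine gap is in the quantitative step. The paper does not obtain the constant $\tfrac{1}{10}$ by augmenting with ad hoc short $M$--$F$-alternating walks and a discharging argument; it invokes the theorem of Kaiser, Kr\'al' and Norine \cite{Kaiser} that, given a well-spread perfect matching $M_1$, there is a second perfect matching $M_2$ with $|M_1 \cap M_2| \leq \tfrac{n}{10}$ (equivalently, two perfect matchings covering a $\tfrac{3}{5}$ fraction of the edges). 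The closed walks needed to cover the edges of $M_1 \setminus M_2$ then come from $M_2$ in a structured way, leaving exactly $M_1 \cap M_2$ as the potential singular edges, and the bound follows immediately. This is precisely where the well-spread hypothesis is used --- as an input to the intersection bound of \cite{Kaiser} --- not, as you suggest, to guarantee ``an abundant supply of short closing walks.''

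As written, your proposal does not prove the statement: the entire weight of the constant $\tfrac{1}{10}$ rests on a discharging/averaging scheme that you only conjecture exists. There is no reason to expect a local witness argument to reach this constant; the underlying $\tfrac{3}{5}$-coverage result of \cite{Kaiser} is a global argument built on the perfect matching polytope (and on well-spread matchings), and a purely local pairing of matching edges via short alternating walks is not known to achieve it. You would also need to verify that the added walks keep the compatibility condition of a partial CDC (no two walks sharing a pair of incident edges at a vertex), which your sketch does not address. The repair is simple: replace the augmentation-plus-discharging plan by a direct appeal to the second-matching theorem of \cite{Kaiser}, and then check that the cycles built from the two matchings form a partial CDC whose uncovered edges are exactly $M_1 \cap M_2$.
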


To prove Theorem~\ref{THM-(n/10)-bound}, in~\cite{Ghanbari}, we first construct a perfect matching $M_1$ such that $M_1$ contains no odd cut of size 3. This guarantees the existence of another perfect matching  $M_2$  satisfying $|M_1 \cap M_2| \leq \frac{n}{10}$ (see~\cite{Kaiser}). Next, decompose $G\backslash M_1$ and $M_1 \Delta M_2$ into circuits $C_1, C_2, \dots C_t$. These circuits form a partial CDC and by Lemma~\ref{Lemma-PCDC} can be extended to an embedding in which the only potential singular edges are those that are not covered by $C_1, C_2, \dots C_t$, that is, edges in $M_1 \cap M_2$. See~\cite{Ghanbari} for the details. 

Suppose that the cubic graph $G$ is $3$-edge-connected. To find $M_1$, we employ \hyperref[algorithm-3cut]{Algorithm~1}, which finds a perfect matching in time $O(n \log^4 n)$. Consequently, determining the overall time complexity reduces to finding the perfect matching $M_2$. We utilize Diks and Stańczyk's algorithm~\cite{Diks} to find $M_2$ in time $O(n \log^2 n)$.  Thus, the embedding guaranteed by Theorem~\ref{THM-(n/10)-bound} can be found in total time  
\[
O(n \log^2 n) + O(n \log^4 n) = O(n \log^4 n).
\]



\section{Conclusion}
We studied a new approach to finding a perfect matching that includes all $3$-edge cuts in a $3$-edge-connected cubic graph, utilizing
the cactus representation of the graph. We referred to such a perfect matching as \emph{well-spread}. In Section~\ref{3cut-alg-section},
we presented an algorithm that finds a well-spread perfect matching in a $3$-edge-connected cubic graph in time $O(n\log^4 n)$. 

In general, the best-known algorithm for finding a well-spread perfect matching in a bridgeless cubic graph has a time complexity of
$O(n^3)$~\cite{Boyd}. In the future, it would be interesting to determine whether the cactus representation can be leveraged to develop
a faster algorithm for finding such a well-spread perfect matching.
However, the issue is that representing 3-edge-cuts in a graph that may contain 2-edge-cuts is more complicated.


\bibliographystyle{alpha}
\bibliography{RN.bib}

\begin{thebibliography}{BBDL01}

\bibitem[BBDL01]{Therese}
Therese~C. Biedl, Prosenjit Bose, Erik~D. Demaine, and Anna Lubiw.
\newblock Efficient algorithms for {Petersen's} matching theorem.
\newblock {\em Journal of Algorithms}, 38(1):110--134, 2001.

\bibitem[BIT13]{Boyd}
Sylvia Boyd, Satoru Iwata, and Kenjiro Takazawa.
\newblock Finding 2-factors closer to {TSP} tours in cubic graphs.
\newblock {\em SIAM Journal on Discrete Mathematics}, 27(2):918--939, 2013.

\bibitem[Din93]{Dinitz1993}
Efim Dinitz.
\newblock The 3-edge-components and a structural description of all 3-edge-cuts
  in a graph.
\newblock In Ernst~W. Mayr, editor, {\em Graph-Theoretic Concepts in Computer
  Science}, pages 145--157, Berlin, Heidelberg, 1993. Springer Berlin
  Heidelberg.

\bibitem[DKL76]{Dinic1976}
Efim Dinitz, Alexander~V. Karzanov, and Michael~V. Lomonosov.
\newblock On the structure of the system of minimum edge cuts of a graph.
\newblock {\em Issledovaniya po Diskretnoi Optimizatsii}, pages 290--306, 1976.

\bibitem[DS10]{Diks}
Krzysztof Diks and Piotr Stanczyk.
\newblock Perfect matching for biconnected cubic graphs in ${O} (n~\log^ 2 n)$
  time.
\newblock In {\em SOFSEM 2010: Theory and Practice of Computer Science: 36th
  Conference on Current Trends in Theory and Practice of Computer Science,
  {Špindlerův Mlýn}, Czech Republic, January 23-29, 2010. Proceedings 36},
  pages 321--333. Springer, 2010.

\bibitem[Gab90]{Gabow}
Harold~N. Gabow.
\newblock Data structures for weighted matching and nearest common ancestors
  with linking.
\newblock In {\em Proceedings of the First Annual {ACM-SIAM} Symposium on
  Discrete Algorithms, 22-24 January 1990, San Francisco, California, {USA}},
  pages 434--443, 1990.

\bibitem[Gab91]{Gabow1991ApplicationsOA}
Harold~N. Gabow.
\newblock Applications of a poset representation to edge connectivity and graph
  rigidity.
\newblock In {\em 32nd Annual Symposium on Foundations of Computer Science, San
  Juan, Puerto Rico, 1-4 October 1991}, pages 812--821, 1991.

\bibitem[G{\v{S}}24]{Ghanbari}
Babak Ghanbari and Robert {\v{S}}{\'a}mal.
\newblock Approximate cycle double cover.
\newblock In Adele~Anna Rescigno and Ugo Vaccaro, editors, {\em Combinatorial
  Algorithms}, pages 421--432, Cham, 2024. Springer Nature Switzerland.

\bibitem[KKN06]{Kaiser}
Tom{\'a}{\v{s}} Kaiser, Daniel Kr{\'a}l’, and Serguei Norine.
\newblock Unions of perfect matchings in cubic graphs.
\newblock In {\em Topics in Discrete Mathematics: Dedicated to Jarik
  Ne{\v{s}}et{\v{r}}il on the Occasion of his 60th Birthday}, pages 225--230.
  Springer, 2006.

\bibitem[MT01]{Mohar}
Bojan Mohar and Carsten Thomassen.
\newblock {\em Graphs on surfaces}.
\newblock Johns Hopkins University Press, 2001.

\bibitem[NI08]{Nagamochi-Book}
Hiroshi Nagamochi and Toshihide Ibaraki.
\newblock {\em Algorithmic aspects of graph connectivity}.
\newblock Cambridge University Press, 2008.

\bibitem[Pet91]{JuliusPetersen}
Julius Petersen.
\newblock Die {T}heorie der regul\"aren graphs.
\newblock {\em Acta Math.}, 15(1):193--220, 1891.

\bibitem[Sch34]{Schonberger1934}
Tibor Schönberger.
\newblock Ein {Beweis} des {Petersenschen} {Graphensatzes}.
\newblock {\em Acta Litt. Sci. Szeged}, 7:51--57, 1934.

\bibitem[Sey79]{seymour1979sums}
Paul~D. Seymour.
\newblock Sums of circuits.
\newblock {\em Graph theory and related topics}, 1:341--355, 1979.

\bibitem[Sze73]{szekeres_1973}
George Szekeres.
\newblock Polyhedral decompositions of cubic graphs.
\newblock {\em Bulletin of the Australian Mathematical Society}, 8(3):367--387,
  1973.

\end{thebibliography}
\end{document}